\documentclass[11pt]{article}
\usepackage[a4paper, margin=3cm]{geometry}
\usepackage[utf8]{inputenc}
\usepackage[english]{babel}

\usepackage[hidelinks]{hyperref}
\usepackage{url}
\usepackage{graphicx}
\usepackage{amsmath,amsthm,amsfonts}
\usepackage{tikz}
\usepackage{array}
\usepackage{widetable}
\usepackage[labelfont=bf]{caption}
\usepackage{natbib}
\usepackage{authblk}


\pdfinfo{
/Title (On Rational Delegations in Liquid Democracy)
/Author (Daan Bloembergen, Davide Grossi, Martin Lackner)
/Keywords (liquid democracy, delegative voting, computational social choice, game theory)
}

\title{On Rational Delegations in Liquid Democracy
	\thanks{This paper (without Appendix) appears in the proceedings of AAAI'19.
	We are indebted to the anonymous reviewers of IJCAI/ECAI'18 and  AAAI'19 for many helpful comments on earlier versions of this paper. We are also grateful to the participants of the LAMSADE seminar at Paris Dauphine University, and the THEMA seminar at University Cergy-Pontoise where this work was presented, for many helpful comments and suggestions. 
	Daan Bloembergen has received funding in the framework of the joint programming initiative ERA-Net Smart Energy Systems' focus initiative Smart Grids Plus, with support from the European Union's Horizon 2020 research and innovation programme under grant agreement No 646039. Davide Grossi was partially supported by EPSRC under grant EP/M015815/1. Martin Lackner was supported by the European Research Council (ERC) under grant number 639945 (ACCORD) and by the Austrian Science Foundation FWF, grant P25518 and Y698.\smallskip\footnoterule}}

\author[1]{Daan Bloembergen}
\author[2]{Davide Grossi}
\author[3]{Martin Lackner}
\affil[1]{Centrum Wiskunde \& Informatica, Amsterdam, The Netherlands}
\affil[2]{Bernoulli Institute, University of Groningen, The Netherlands}
\affil[3]{TU Wien, Vienna, Austria}
\affil[ ]{\normalfont{d.bloembergen@cwi.nl, d.grossi@rug.nl, lackner@dbai.tuwien.ac.at}}


\newtheorem{theorem}{Theorem}

\newtheorem{fact}{Fact}

\newtheorem{example}{Example}
\newtheorem{proposition}{Proposition}


\pagestyle{plain}


\DeclareMathOperator*{\argmax}{arg\,max}

\newcommand{\tuple}[1]{\left\langle #1 \right\rangle} 
\newcommand{\set}[1]{\left\{ #1 \right\}} 

\newcommand{\aac}{\bar{q}^*}
\newcommand{\sw}{\mathsf{SW}}
\newcommand{\g}{\mathsf{G}}
\newcommand{\NE}{\textit{NE}}
\newcommand{\PoA}{\textit{PoA}}

\renewcommand{\eqref}[1]{Equation~(\ref{#1})}

\newcolumntype{P}[1]{>{\centering\arraybackslash}p{#1}}

\newcommand{\note}[2][Note]{\textcolor{blue}{\textbf{\large [}\colorbox{yellow}{\textbf{#1:}}{\small #2}\textbf{\large ]}}}


\begin{document}

\maketitle

\begin{abstract}
Liquid democracy is a proxy voting method where proxies are delegable. We propose and study a game-theoretic model of liquid democracy to address the following question: when is it rational for a voter to delegate her vote? We study the existence of pure-strategy Nash equilibria in this model, and how group accuracy is affected by them. We complement these theoretical results by means of agent-based simulations to study the effects of delegations on group's accuracy on variously structured social networks.
\end{abstract}


\section{Introduction}

Liquid  democracy \citep{blum2016liquid} is an influential proposal in recent debates on democratic reforms in both Europe and the US. Several grassroots campaigns, as well as local parties, experimented with this novel type of decision making procedure. Examples include the German Piratenpartei\footnote{\url{https://www.piratenpartei.de/}} and the EU Horizon 2020 project WeGovNow~\citep{boella2018wegovnow}, which have incorporated the LiquidFeedback\footnote{\url{https://liquidfeedback.org/}} platform in their decision making, as well as grass-roots organizations such as the Democracy Earth Foundation\footnote{\url{https://www.democracy.earth/}}. Liquid democracy is a form of proxy voting \citep{miller1969program,Tullock_1992,Alger_2006,green2015direct,cohensius2017proxy} where, in contrast to classical proxy voting, proxies are delegable (or transitive, or transferable). Suppose we are voting on a binary issue, then each voter can either cast her vote directly, or she can delegate her vote to a proxy, who can again either vote directly or, in turn, delegate to yet another proxy, and so forth. Ultimately, the voters that decided not to delegate cast their ballots, which now carry the weight given by the number of voters who entrusted them as proxy, directly or indirectly.


\paragraph{Contribution}
The starting point of our analysis is an often cited feature of liquid democracy:
transitive delegations reduce the level of duplicated effort required by direct voting, by freeing voters from the need to invest effort in order to vote accurately. The focus of the paper is the decision-making problem that voters, who are interested in casting an accurate vote, face between voting directly, and thereby incurring a cost in terms of effort invested to learn about the issue at hand, or delegating to another voter in their network, thereby avoiding costs. 
We define a game-theoretic model, called {\em delegation game}, to represent this type of interaction. We establish pure strategy Nash equilibrium existence results for classes of delegation games, and study the quality of equilibria in terms of the average accuracy they enable for the population of voters, both analytically and through simulations. Proofs of the two main results (Theorems \ref{thm:NE-deterministic} and \ref{thm:delegation-games-NE}) are presented in full, while we provide proofs of the simpler secondary results as an Appendix only.

By means of simulations we also study the effects of different network structures on delegation games in terms of: performance against direct voting, average accuracy and the probability of a correct majority vote, the number and quality of voters acting as ultimate proxies (so-called gurus) and, finally, the presence of delegation cycles.
To the best of our knowledge, this is the first paper providing a comprehensive study of liquid democracy from a game-theoretic angle.


\paragraph{Related Work}
Although the idea of delegable proxy was already sketched by \citet{dodgson84principles}, only a few very recent papers have studied aspects of liquid democracy in the (computational) social choice theory \citep{brandt16handbook} literature. \citet{kling2015voting} provide an analysis of election data from the main platform implementing a liquid democracy voting system (Liquid Feedback) for the German Piratenpartei. They focus on network theoretic properties emerging from the structure of delegations---with particular attention to the number of highly influential gurus or `super-voters'. Inspired by their experimental analysis, \citet{golz2018fluid} propose and analyze a variant of the liquid democracy scheme able to restrict reliance on super-voters. \citet{skowron17proportional} study an aspect of the Liquid Feedback platform concerning the order in which proposals are ranked and by which they are brought to the attention of the community. \citet{boldi2011viscous} investigate applications of variants of the liquid democracy voting method (called viscous democracy) to recommender systems.
\citet{brill2018interactive} presents some research directions in the context of liquid democracy.
A general, more philosophical discussion of liquid democracy is provided by \citet{blum2016liquid}.

More directly related to our investigations is the work by \citet{christoff17binary} and, especially, by \citet{kahng18liquid}.
The first paper studies liquid democracy as an aggregator---a function mapping profiles of binary opinions to a collective opinion---in the judgment aggregation and binary voting tradition \citep{Grossi_2014,endriss16judgment}.
The focus of that paper are the unintended effects that transferable proxies may have 
due to delegation cycles, and due to the failure of rationality constraints normally satisfied by direct voting. 

The second paper addresses one of the most cited selling arguments for liquid democracy: delegable proxies guarantee that better informed agents can exercise more weight on group decisions, thereby increasing their quality. Specifically, \citet{kahng18liquid} study the level of accuracy that can be guaranteed by liquid democracy (based on vote delegation with weighted majority) vs. direct voting by majority. Their key result consists in showing that no `local' procedure to select proxies can guarantee that liquid democracy is, at the same time, never less accurate (in large enough graphs)
and sometimes strictly
more accurate 
than direct voting.
In contrast to their work,
we assume that agents incur costs (effort) when voting directly, and on that basis we develop a game-theoretic model. Also, we assume agents aim at tracking their own type rather than an external ground truth, although we do assume such a restriction in our simulations to better highlight how the two models are related and to obtain insights applicable to both.



\section{Preliminaries} 

\subsection{Types, Individual Accuracy and Proximity} 

We are concerned with a finite set of agents (or voters, or players) $N = \set{1, \ldots, n}$ having to decide whether $x = 1$ or $x = 0$. For each agent one of these two outcomes is better, but the agent is not necessarily aware which one. We refer to this hidden optimal outcome as the {\em type of agent $i$} and denote it by $\tau_i \in \set{0,1}$. Agents want to communicate their type truthfully to the voting mechanism, but they know it only imperfectly. This is captured by the \emph{accuracy} $q_i$ of an agent $i$:
$q_i$ determines the likelihood that, if $i$ votes directly, she votes according to her type $\tau_i$.
We assume that an agent's accuracy is always $\geq 0.5$, i.e., at least as good as a coin toss.

We distinguish two settings depending on whether the agents' types are deterministic or probabilistic. 
A {\em deterministic type profile} $T=\tuple{\tau_1, \ldots, \tau_n}$ simply collects each agent's type.
In {\em probabilistic type profiles} types are independent random variables drawn according to a distribution ${\mathbb P}$. Given a probabilistic type profile, the likelihood that any two agents $i,j\in N$ are of the same type is called the {\em proximity} $p_{i,j}$ where
$p_{i,j}={\mathbb P}(\tau(i)=\tau(j)) = {\mathbb P}(\tau(i) = 1)\cdot {\mathbb P}(\tau(j) = 1) +  (1- {\mathbb P}(\tau(i) = 1)) \cdot  (1- {\mathbb P}(\tau(j) = 1))$. In the probabilistic setting we assume agents know such value although, importantly, they do not know ${\mathbb P}$.
In a deterministic type profile, we have $p_{i,j}= 1$ if $\tau_i=\tau_j$ and $p_{i,j}= 0$ otherwise. Following standard equilibrium theory, our theoretical results assume agents act as if they have access to the accuracy of each agent. More realistically, in our simulations we assume agents have access to such information only with respect to neighbors on an underlying interaction structure.

\subsection{Interaction Structure and Delegations} 

Agents are nodes in a network (directed graph) represented by a relation $R \subseteq N^2$. For $i \in N$ the neighborhood of $i$ in $\langle N, R \rangle$ is denoted $R(i)$, i.e., the agents that are directly connected to $i$. 
Agents have the choice of either voting themselves, thereby relying solely on their own accuracy, or delegating to an agent in their neighborhood. A {\em delegation profile} is a vector ${\bf d} = \tuple{d_1, \ldots, d_n} \in N^n$. Given a delegation profile ${\bf d}$ we denote by $d_i$ the \emph{proxy} selected by $i$ in ${\bf d}$. Clearly a delegation profile can be viewed as a functional graph on $N$ or, equivalently, as a map in ${\bf d}: N \to N$ where ${\bf d}(i) = d_i$. When the iterated application of ${\bf d}$ from $i$ reaches a fixed point we denote such fixed point as $d^*_i$ and call it $i$'s {\em guru} (in ${\bf d}$).
In the following, we write $N^*$ to denote the set of voters whose delegation does not lay on a path ending on a cycle, i.e., the set of voters $i$ for which $d^*_i$ exists. 
We write ${\bf d}' = ({\bf d}_{-i},j)$ as a short form for ${\bf d}' = \tuple{d_1, \ldots, d_{i-1}, j, d_{i+1},\dots, d_n} $.

As agents may only be able to observe and interact with their direct network neighbors, structural properties of the interaction network may play a role in the model dynamics. In our simulations we will focus on undirected graphs (that is, $R$ will be assumed to be symmetric, as social ties are normally mutual) consisting of one single connected component (that is, $N^2$ is included in the reflexive transitive closure of $R$).  
Under these assumptions, we consider four typical network structures that are well represented in the literature on social network analysis (cf. \citealt{jackson08social}): 1) the {\em random} network, in which each pair of nodes has a given probability of being connected \citep{erdos1959random}; 2) the {\em regular} network, in which all nodes have the same degree; 3) the {\em small world} network, which features a small average path length and high clustering \citep{watts1998collective}; and 4) the {\em scale free} network, which exhibits a power law degree distribution \citep{barabasi1999emergence}.\footnote{Although random and regular graphs are not generally applicable to real-world settings, they serve as a useful baseline to illustrate the effects of network structure on delegations.
}

\section{A Model of Rational Delegations} 

\subsection{Individual Accuracy under Delegable Proxy}

Each agent $i$ has to choose between two options: either to vote herself with accuracy $q_i$ or to delegate, thereby inheriting the accuracy of another voter (unless $i$ is involved in a delegation cycle).
These choices are recorded in the delegation profile ${\bf d}$ and can be used to compute the individual accuracy for each agent $i\in N^*$ as follows:
\begin{align} \label{eq:accuracy}
    \small
  \hspace{-0.3cm}  q^*_i({\bf d}) = 
    \begin{cases} 
    q_{d^*_i}\cdot p_{i,d^*_i} 
    + (1-q_{d^*_i}) \cdot (1-p_{i,d^*_i}) & \text{if }i\in N^* \\
    0.5 &\text{if }i\notin N^*
    \end{cases} 
\end{align}
In \eqref{eq:accuracy}
$i$'s accuracy equals the likelihood that $i$'s guru has the same type and votes accurately plus the likelihood that $i$'s guru has the opposite type and fails to vote accurately. Note that if $i$ votes directly, i.e., $d_i=i$, then $q^*_i({\bf d}) = q_i$. Observe that if $i$'s delegation leads to a cycle ($i\notin N^*$), $i$'s accuracy is set to $0.5$. The rationale for this assumption is the following. If an agent delegates into a cycle, even though she knows her own accuracy and she actively engages with the voting mechanism by expressing a delegation, she fails to pass information about her type to the mechanism. No information is therefore available to decide about her type. 


It may be worth observing that, by \eqref{eq:accuracy}, in a deterministic type profile we have that $p_{i,j}\in\{0,1\}$ and therefore $i$'s accuracy reduces to: $q_{d^*_i}$ if $i\in N^*$ and $\tau(i)=\tau(d^*_i)$; $1-q_{d^*(i)}$ if $i\in N^*$ and $\tau(i)\neq\tau(d^*_i)$; and $0.5$ if $i\notin N^*$.

Before introducing our game theoretic analysis, we make the following observation. Agents have at their disposal an intuitive strategy to improve their accuracy: simply delegate to a more accurate neighbor. We say that a delegation profile ${\bf d}$ is \emph{positive} if for all $j\in N$ either $d_j=j$ or $q^*_j({\bf d})> q_j$.
Furthermore, we say that a delegation from $i$ to a neighbor $j$ is 
\emph{locally positive} if $q_j\cdot p_{i,j} + (1-q_j)\cdot (1-p_{i,j})>q_i$.

\begin{proposition}
Let ${\bf d}$ be a positive delegation profile.
Further, let $s,t\in N$, $d_s=s$, and ${\bf d}'=({\bf d}_{-s},t)$, i.e., agent $s$ votes directly in ${\bf d}$ and delegates to $t$ in ${\bf d}'$.
If the delegation from $s$ to $t$ is locally positive, then 
${\bf d}'$ is positive (proof in Appendix~\ref{appendix:proof}).
\label{prop:positive}
\end{proposition}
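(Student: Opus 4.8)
The plan is to reparametrize the model so that effective accuracies compose multiplicatively, and then to check positivity tree-by-tree after the new edge is added. Since types are independent, the given proximity formula factorizes as $p_{i,j} = \tfrac12\left(1 + a_i a_j\right)$, where $a_i := 2\,{\mathbb P}(\tau(i)=1)-1 \in [-1,1]$ (with $a_i\in\{-1,+1\}$ in the deterministic case). Writing $\alpha_i := q_i - \tfrac12 \ge 0$, \eqref{eq:accuracy} then says that an agent $i$ whose guru is $g$ satisfies $q^*_i(\mathbf{d}) - \tfrac12 = a_i a_g\,\alpha_g$. First I would record a structural fact: a positive profile has $N^*=N$, because any agent delegating into a cycle has accuracy $0.5 \not> q_i$, violating positivity. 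Hence $\mathbf{d}$ is a forest of trees whose roots are the self-looping gurus, $s$ is one such root, and I set $S := \{j : d^*_j = s\}$ to be its tree.

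The step I expect to be the main obstacle is ruling out that the new edge $s \to t$ closes a delegation cycle. This happens precisely when $t$ already delegates transitively to $s$, i.e.\ $d^*_t = s$; then every agent on the new cycle drops to accuracy $0.5$ and positivity of $\mathbf{d}'$ fails at $s$. I would exclude this as follows. If $d^*_t=s$, positivity of $\mathbf{d}$ at $t$ gives $a_t a_s\,\alpha_s > \alpha_t$, while local positivity of $s\to t$ gives $a_s a_t\,\alpha_t > \alpha_s$. Both right-hand sides are $\ge 0$, forcing $\alpha_s,\alpha_t>0$ and $a_s a_t>0$; multiplying the two inequalities and cancelling the positive factor $\alpha_s\alpha_t$ yields $(a_s a_t)^2>1$, contradicting $|a_s|,|a_t|\le 1$. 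Therefore $g := d^*_t \neq s$, no cycle is created, and in $\mathbf{d}'$ every agent in $S$ (including $s$) acquires the guru $g$, whereas all agents outside $S$ keep their path, guru and accuracy unchanged, hence retain their positivity from $\mathbf{d}$.

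It then remains to verify positivity of $\mathbf{d}'$ on $S$, i.e.\ to show $a_j a_g\,\alpha_g > \alpha_j$ for each $j\in S$. I would treat $s$ first: combining local positivity $a_s a_t\alpha_t>\alpha_s$ with positivity of $\mathbf{d}$ at $t$, namely $a_t a_g\alpha_g>\alpha_t$ (when $t=g$ the desired inequality is local positivity itself), and tracking signs gives $a_s a_g>0$ together with $a_s a_t^2 a_g\alpha_g>\alpha_s$; since $a_s a_g\alpha_g>0$ and $a_t^2\le1$, monotonicity yields $a_s a_g\alpha_g\ge a_s a_t^2 a_g\alpha_g>\alpha_s$, which is positivity at $s$. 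For the remaining $j\in S$, positivity of $\mathbf{d}$ at $j$ (whose old guru was $s$) gives $a_j a_s\alpha_s>\alpha_j$, whence $a_j a_s>0$; multiplying the just-proved bound $a_s a_g\alpha_g>\alpha_s$ by $a_j a_s>0$ and again discarding the factor $a_s^2\le1$ delivers $a_j a_g\alpha_g>\alpha_j$. This establishes positivity of $\mathbf{d}'$ at every agent.

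The recurring device throughout is the same: multiply two ``positive-advantage'' inequalities and then weaken one multiplicative factor using $a^2\le 1$. This single trick both excludes the problematic cycle and propagates positivity along the reattached tree, and because the deterministic case corresponds to $a_i\in\{-1,+1\}$ it is handled without separate treatment.
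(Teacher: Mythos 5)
Your proof is correct, and it takes a genuinely different route from the paper's. The paper works directly with the proximities: it establishes the composition identity $p_{s,r}=p_{s,t}p_{t,r}+(1-p_{s,t})(1-p_{t,r})+\gamma$ with correction term $\gamma=-2(2x_s-1)(2x_r-1)(x_t-1)x_t$, shows $\gamma\geq 0$ via the same sign dichotomy you invoke, and then pushes $q^*_s({\bf d}')>q_s$ through a fairly long chain of explicit inequalities; for the rest of the tree it only treats agents with $d_z=s$ and appeals to ``the same argument.'' Your reparametrization $a_i=2x_i-1$, $\alpha_i=q_i-\tfrac{1}{2}$, under which $p_{i,j}=\tfrac{1}{2}(1+a_ia_j)$ and $q^*_i-\tfrac{1}{2}=a_ia_g\alpha_g$ for a delegating agent with guru $g$, collapses all of that into the single device ``multiply two advantage inequalities, then discard a factor $a^2\leq 1$,'' and it handles every agent of the reattached tree uniformly rather than only the direct delegators to $s$. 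Beyond brevity, your version buys two real things: (i) you explicitly prove that the new edge $s\to t$ cannot close a delegation cycle, i.e., that $d^*_t\neq s$ --- a case the paper's proof passes over silently even though a cycle would drop $q^*_s({\bf d}')$ to $0.5$ and falsify the claim, so this step is genuinely needed and not pedantry; (ii) the deterministic case $a_i\in\{-1,+1\}$ needs no separate discussion. What the paper's formulation buys in exchange is that it stays in the original coordinates and rehearses exactly the algebraic identities that are reused in the proof of Theorem 2. One cosmetic caveat: your identity $q^*_i-\tfrac{1}{2}=a_ia_g\alpha_g$ should be stated for $g\neq i$ only (for a self-voter $p_{i,i}=1$, not $x_i^2+(1-x_i)^2$), but you never apply it to a self-voter, so nothing breaks.
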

However, locally positive delegations do not necessarily correspond to optimal delegations.
This can be easily seen
in an example where agent $i$ is not a neighbor of a very competent agent $j$, but would have to delegate via an intermediate agent $k$ (who delegates to $j$). If this intermediate agent $k$ has a lower accuracy than $i$, then the delegation from $i$ to $k$ would not be locally positive, even though it is an optimal choice. So utility-maximization may require information which is inherently non-local (accuracy of `far' agents). 

\subsection{Delegation Games} 

We assume that each agent $i$ has to invest an effort $e_i$ to manifest her accuracy $q_i$.
If she delegates, she does not have to spend effort.
Agents aim therefore at maximizing the trade-off between the accuracy they can achieve (either by voting directly or through proxy) and the effort they spend. Under this assumption, the binary decision set-up with delegable proxy we outlined above can be used to define a natural game---called {\em delegation game}---$G = \tuple{N, {\mathbb P}, R, \Sigma_i, u_i}$, with $i \in N$, where $N$ is the set of agents, ${\mathbb P}$ is the (possibly degenerate) distribution from which the types of the agents in $N$ are drawn, $R$ the underlying network as defined above, $\Sigma_i\in  N$ is the set of strategies of agent $i$ (voting, or choosing a specific proxy), and 
\begin{align}
    u_i({\bf d}) = 
    \begin{cases} q^*_i({\bf d})  &\text{if } d_i\neq i\\
    q_i - e_i &\text{if } d_i= i\\
    \end{cases} \label{eq:utility}
\end{align}
is agent $i$'s utility function. The utility $i$ extracts from a delegation profile equals the accuracy she inherits through proxy or, if she votes, her accuracy minus the effort spent.\footnote{No utility is accrued for gaining voting power in our model.} In delegation games we assume that $q_i-e_i\geq 0.5$ for all $i\in N$. This is
because if $q_i-e_i < 0.5$, then $i$ would prefer a random effortless choice over taking a decision with effort.

A few comments about the setup of \eqref{eq:utility} are in order. First of all, as stated earlier, we assume agents to be truthful. They do not aim at maximizing the chance their type wins the vote, but rather to relay their type to the mechanism as accurately as possible.\footnote{Notice however that our modeling of agents' utility remains applicable in this form even if agents are not truthful but the underlying voting rule makes truthfulness a dominant strategy---such as majority in the binary voting setting used here.} 
Secondly, notice that the utility an agent extracts from a delegation profile may equal the accuracy of a random coin toss when the agent's delegation ends up into a delegation cycle (cf. \eqref{eq:accuracy}). If this happens the agent fails to relay information about her type, even though she acted in order to do so. This justifies the fact that $0.5$ is also the lowest payoff attainable in a delegation game. 



The following classes of delegation games will be used in the paper: 
games with {\em deterministic profiles}, i.e., where ${\mathbb P}$ is degenerate and all players are assigned a crisp type from $\set{0,1}$; {\em homogeneous} games, where all players have the same (deterministic) type;\footnote{This is the type of interaction studied, albeit not game-theoretically, by \citet{kahng18liquid} and normally assumed by jury theorems \citep{grofman83thirteen}.} and {\em effortless voting} games, where for each $i \in N$ we have $e_i = 0$.

As an example, a homogeneous game in matrix form is given in Table~\ref{table:dgame}, where $N = \set{1,2}$, $R = N^2$ and the distribution yields the deterministic type profile $T = \tuple{1,1}$. Interestingly, if we assume that $q_i - e_i > 0.5$ with $i \in \set{1,2}$, and that\footnote{We use here the usual notation $-i$ to denote
 $i$'s opponent.} $q_{-i} > q_i - e_i$ (i.e., the opponent's accuracy is higher than the player's individual accuracy minus her effort), then the game shares the ordinal preference structure of the class of anti-coordination games: players need to avoid coordination on the same strategy (one should vote and the other delegate), with two coordination outcomes (both players voting or both delegating) of which the second (the delegation cycle) is worst for both players. Notice that, were the underlying network not complete (i.e., $R \subset N^2$), the matrix would be shrunk by removing the rows and columns corresponding to the delegation options no longer available.


\bgroup
\def\arraystretch{1.1}
\begin{table}[tb]
	\centering
	\begin{tabular}{rP{2.2cm}|P{2.2cm}|}
		 & \multicolumn{1}{c}{vote} & \multicolumn{1}{c}{delegate (to $1$)} \\ \cline{2-3}
		\multicolumn{1}{r|}{vote} & $q_1-e_1, q_2-e_2$ & $q_1-e_1,q_1 $ \\ \cline{2-3}
		\multicolumn{1}{r|}{delegate (to $2$)} & $q_2,q_2-e_2 $ & $0.5,0.5$ \\ \cline{2-3}
	\end{tabular}
	\caption{A two-player delegation game. The row player is agent $1$ and the column player is agent $2$.}
	\label{table:dgame}
\end{table}
\egroup

The introduction of effort has significant consequences on the delegation behavior of voters, and we will study it in depth in the coming sections. It is worth noting immediately that the assumptions of Proposition~\ref{prop:positive} no longer apply, since agents may prefer to make delegations that are not locally positive due to the decreased utility of voting directly.

\subsection{Existence of Equilibria in Delegation Games} 

In this section we study the existence of pure strategy Nash Equilibria (NE) in two classes of delegation games. NE describe how ideally rational voters would resolve the effort/accuracy trade-off. Of course,  such voters have common knowledge of the delegation game structure---including, therefore, common knowledge of the accuracies of `distant' agents in the underlying network. Our simulations will later lift some of such epistemic assumptions built into NE. 

\paragraph{Deterministic Types}

In the following we provide a NE existence result for games with deterministic type profiles.
\begin{theorem}
Delegation games with deterministic type profiles always have a (pure strategy) NE.\label{thm:NE-deterministic}
\end{theorem}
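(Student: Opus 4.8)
The plan is to prove the result by an explicit construction rather than by a fixed-point or potential argument, exploiting a structural feature of deterministic profiles: delegating to a guru of the \emph{opposite} type is never worthwhile. Indeed, if $i\in N^*$ with $\tau_i\neq\tau_{d^*_i}$, then by \eqref{eq:accuracy} we have $q^*_i({\bf d})=1-q_{d^*_i}\leq 0.5\leq q_i-e_i$, so $i$ weakly prefers voting; and delegating into a cycle likewise yields only $0.5$. Hence no agent can ever have a profitable deviation that lands on an opposite-type guru, and it suffices to exhibit a profile in which every agent either votes or reaches a same-type guru of maximal reachable accuracy.

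Concretely, I would first partition $N$ into the two type classes $N_0$ and $N_1$, and for each $t\in\{0,1\}$ consider the subgraph that $R$ induces on $N_t$. Within every connected component $C$ of this induced subgraph I fix an agent $g_C\in C$ of maximal accuracy, i.e.\ $q_{g_C}=\max_{j\in C}q_j$, let $g_C$ vote, and let every other agent of $C$ delegate to a neighbour lying one step closer to $g_C$ along a spanning tree of $C$ rooted at $g_C$. Because $C$ is connected and $R$ is symmetric on $N_t$, this is a well-defined delegation profile in which every agent of $C$ has guru $g_C$ and no delegation cycle is created. Assembling these component-wise choices across both type classes yields a single profile ${\bf d}$.

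It then remains to check that ${\bf d}$ is a NE, which I would do by a case analysis on a single deviating agent $i$, say of type $t$ and in component $C$. A non-root $i$ currently obtains $q_{g_C}=\max_{j\in C}q_j\geq q_i\geq q_i-e_i$; voting yields $q_i-e_i$ (no gain), delegating to any same-type neighbour either again reaches $g_C$ or closes a cycle (payoff $0.5$), and delegating to an opposite-type neighbour yields at most $0.5$ by the observation above, so no deviation helps. For the root $g_C$ the only subtlety is that voting costs effort $e_{g_C}$; however, every same-type neighbour of $g_C$ lies in $C$ and therefore already delegates towards $g_C$, so any delegation by $g_C$ would close a cycle and return $0.5\leq q_{g_C}-e_{g_C}$, while opposite-type delegations again give at most $0.5$. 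Thus $g_C$ has no profitable deviation either, and ${\bf d}$ is a Nash equilibrium.

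The main obstacle to a softer argument, and the reason I favour this explicit construction, is that the game is not obviously a potential game: since utilities depend on the \emph{guru} reached through \eqref{eq:utility}, a single agent's change of proxy can simultaneously alter the gurus---and hence the payoffs---of all agents routed through her, and delegation cycles inject the discontinuous payoff $0.5$. The two observations that tame this coupling are that opposite-type delegations are dominated (which keeps all relevant delegation trees monochromatic) and that a component-wise accuracy maximiser can always be made to vote safely, because every alternative available to it either fails to improve on the best guru already present in its component or merely closes a cycle.
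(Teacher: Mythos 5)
Your construction and verification are sound for the special case of a \emph{symmetric} relation $R$, and in that case they closely mirror the paper's argument: partition by type, elect a local accuracy maximiser in each piece, route everyone in the piece to it, and check deviations using the two observations that an opposite-type guru yields at most $0.5$ and that the root's only alternatives are cycles or opposite-type gurus. (Your choice of a maximal-$q$ root instead of the paper's maximal-$(q_i-e_i)$ root is immaterial in that setting, since $q_{g_C}\geq q_l\geq q_l-e_l$ for every $l$ in the component.) However, there is a genuine gap: the theorem is stated for the general model, in which $R$ is a \emph{directed} graph---symmetry is an assumption the paper introduces only for its simulations. Your argument leans on symmetry twice. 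First, the notions of ``connected component'' and ``spanning tree rooted at $g_C$ with every agent delegating one step closer to $g_C$'' presuppose that every agent of $C$ can reach $g_C$ along directed edges, which fails in a general digraph. Second, and more substantively, in the directed case it can be strictly better for a group of mutually reachable agents to delegate \emph{out} of their group to a more accurate guru whom they can reach but who cannot reach them back; your construction never considers this. Concretely, take $N_1=\{a,b\}$ with the single arc $(a,b)\in R$ and $q_a - e_a < q_b < q_a$: your rule selects $a$ as root and asks $b$ to delegate towards $a$, which is impossible, and the repaired profile in which both vote is not an equilibrium because $a$ profitably deviates to $b$. This is precisely why the paper decomposes each type class into strongly connected components, traverses the condensation DAG bottom-up, and for each SCC compares the best internal option $q_i-e_i$ against the accuracies $q^*_j$ of gurus already fixed in reachable downstream SCCs, sending the whole SCC outward when one of those is strictly better. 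Your proof would need to be augmented with this outward-delegation case (and the corresponding equilibrium check) to establish the theorem as stated.
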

\begin{proof}
First of all, observe that since the profile is deterministic, for each pair of agents $i$ and $j$, $p_{i,j} \in \set{0,1}$. 
The proof is by construction. 
First, we partition the set of agents $N$ into $N_1 = \set{i \in N \mid \tau(i) = 1}$ and $N_0 = \set{i \in N \mid \tau(i) = 0}$.
We consider these two sets separately; without loss of generality let us consider $N_1$.
Further we consider the network $R_1=\{(i,j)\in N_1\times N_1: (i,j)\in R\}$.
Since $(N_1,R_1)$ can be seen as a directed graph, we can partition it into Strongly Connected Components (SCCs).
If we shrink each SCC into a single vertex, we obtain the condensation of this graph; note that such a graph is a directed acyclic graph (DAG).
We construct a delegation profile ${\bf d}$ by traversing this DAG bottom up, i.e., starting with leaf SCCs.

Let $S\subseteq N_1$ be a set of agents corresponding to a leaf SCC in the condensation DAG.
We choose an agent $i$ in $S$ that has maximum $q_i-e_i$.
Everyone in $S$ (including $i$) delegates to $i$.
Now let $S\subseteq N_1$ be a set of agents corresponding to an inner node SCC in the condensation DAG and assume that we have already defined the delegation for all SCCs that can be reached from $S$.
As before, we identify an agent $i\in S$ with maximum $q_i-e_i$.
Further, let $T\subseteq N_1\setminus S$ be the set of all voters $j$ that can be reached from $S$ in $(N_1,R_1)$, and for which $q^*_j>q_i-e_i$.
We distinguish two cases. 
(i) If $T\neq \emptyset$, then we choose an agent $k\in T$ with $q^*_k=\max_{j\in T} q^*_j$ and all agents in $S$ directly or indirectly delegate to $k$.
(ii) If $T=\emptyset$, all agents in $S$ delegate to $i$.
This concludes our construction (as for $N_0$ the analogous construction applies); let ${\bf d}$ be the corresponding delegation profile.

It remains to verify that this is indeed a NE:
Let $i$ be some agent in an SCC $S$, and, without loss of generality, let $i\in N_1$.
Observe that since we have a deterministic profile, if agent $i$ changes her delegation to $j$, then $i$'s utility changes to $q_j^*({\bf d})$ if $i\in N_1$ and $1-q_j^*({\bf d})$ if $i\in N_0$.
First, note that for all agents $k\in N$, $q^*_k({\bf d})\geq q_k-e_k \geq 0.5$.
Hence, we can immediately exclude that for agent $i$ delegating to an agent in $j\in N_0$ is (strictly) beneficial, as it would yield an accuracy of at most $1-q^*_j\leq 0.5$.
Towards a contradiction assume there is a beneficial deviation to an agent $j\in N_1$, i.e., there is an agent $j\in R(i)\cap N_1$ with $q^*_j({\bf d}) > q_i^*({\bf d})$.
Let us now consider the three cases: (1) $d_i=i$, (2) $d^*_i\in S$ but $d_i\neq i$, and (3) $d^*_i\notin S$.
In case (1), everyone in $S$ delegates to $i$. Hence, if $j\in S$, a cycle would occur yielding a utility of $0.5$, which is not sufficient for a beneficial deviation.
If a delegation to $j\notin S$ is possible but was not chosen, then by construction $q^*_j \leq q_i-e_i$ and hence this deviation is not beneficial. We conclude that in case (1) such an agent $j$ cannot exist.
In case (2), everyone in $S$ delegates to $d^*_i$. Hence, if $j\in S$, then $d^*_j=d^*_i$, a contradiction. If $j\notin S$, the same reasoning as before applies and hence also here we obtain a contradiction.
In case (3), by construction, $d^*_i\notin S$ had been chosen to maximise accuracy, hence $j\in S$.
Since for all $k\in S$, $d_k^*=d_i^*$, only a deviation to $i$ itself can be beneficial, i.e., $j=i$. However, since $d^*_i$ was chosen because $q^*_{d^*(i)} > q_i-e_i$, no beneficial deviation is possible.
\end{proof}
\noindent
It follows that also homogeneous games always have NE. 


\paragraph{Effortless Voting} 

Effortless voting ($e_i=0$ for all $i\in N$) is applicable whenever effort is spent in advance of the decision and further accuracy improvements are not possible.


\begin{theorem}
Delegation games with effortless voting always have a (pure strategy) NE.\label{thm:delegation-games-NE}
\end{theorem}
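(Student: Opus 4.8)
The plan is to first linearize the utilities. Writing $a_i=\mathbb{P}(\tau_i=1)$ and $b_i=a_i-\tfrac12$, independence of the types gives $p_{i,g}=\tfrac12+2b_ib_g$, so for any $i\in N^*$ that delegates to a guru $g=d^*_i\neq i$, \eqref{eq:accuracy} collapses to $q^*_i({\bf d})=\tfrac12+2b_iw_g$, where $w_g:=(2q_g-1)b_g$ depends only on the guru. Hence, with $e_i=0$, agent $i$'s utility is $q_i$ if she votes, $\tfrac12$ if she lands on a cycle, and $\tfrac12+2b_iw_{d^*_i}$ otherwise. Since $b_i$ is fixed, all agents with $b_i>0$ rank gurus identically (larger $w_g$ is better) and symmetrically for $b_i<0$, so I would partition $N$ into camps $N^+=\{i:b_i>0\}$, $N^-=\{i:b_i<0\}$ and $N^0=\{i:b_i=0\}$.

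Next I would argue that the camps decouple. A guru $g$ has $w_g>0$ only if $b_g>0$, so for $i\in N^+$ any guru outside $N^+$ yields $q^*_i\le\tfrac12\le q_i$; thus no agent ever strictly benefits from a cross-camp delegation, and agents in $N^0$ (as well as anyone reaching a guru with $w_g\le 0$) can do no better than voting. It therefore suffices to restrict every agent's strategies to voting or delegating inside her own camp: at any such restricted equilibrium $u_i\ge q_i\ge\tfrac12$, which already dominates every excluded (cross-camp or cycle-inducing) deviation, so a restricted NE is a NE of the whole game. The restricted game splits into independent subgames on $N^+$, on $N^-$ (isomorphic to the former via $w\mapsto -w$), and on $N^0$ (trivial), so it remains to produce a NE inside $N^+$.

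For $N^+$ I would run best-response dynamics from the profile in which everyone votes ($d_i=i$), letting one agent at a time move to a strictly better response and always choosing, among best responses, one that does not create a cycle (legitimate, since voting yields $q_i\ge\tfrac12$ and so beats any cycle). The heart of the argument is the invariant, proved by induction on the number of moves, that \emph{(i)} no agent ever switches from delegating back to voting, and \emph{(ii)} each agent's guru value $w_{d^*_i}$, and hence $u_i$, never decreases. Granting the invariant, the moving agent strictly increases her utility while her followers merely inherit her (weakly higher) guru value, so no one decreases; thus $\sum_{i\in N^+}u_i$ strictly increases at every step. As there are finitely many profiles, the process must halt, and it can halt only at a NE.

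The main obstacle is exactly this invariant, because the single dangerous move is an agent reverting from delegation to voting: that is the only way a follower's guru could be replaced by a lower-$w$ one and a utility could drop. I would close it as follows. If $i$ currently delegates, she did so at her last move because it beat voting, i.e.\ $\tfrac12+2b_iw>q_i$ for the guru value $w$ she then had; by \emph{(ii)} her guru value has only risen since, so delegation still beats voting and her best response keeps her delegating, which gives \emph{(i)}. Given \emph{(i)}, every executed move is either ``start delegating'' or ``improve the guru'', and in both cases $i$'s new guru value strictly exceeds the old one (for the first case one uses $b_i^2\le\tfrac14$ to get $q_i\ge\tfrac12+2b_iw_i$, so improving past $q_i$ forces a strictly larger $w$); the followers, who keep routing through $i$ to a genuine guru because the chosen best response is cycle-free, inherit this larger value, giving \emph{(ii)}. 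A final routine check confirms that redirecting $i$ cannot trap a follower in a cycle, since any such cycle would have to pass through $i$ and would therefore have been excluded from $i$'s best response.
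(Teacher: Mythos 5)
Your proposal is correct, and at the top level it is the same proof as the paper's: start from the all\nobreakdash-vote profile, let agents sequentially play strictly improving best responses, and show convergence via the key lemma that a mover's improvement never decreases the utility of the agents delegating through her. Where you genuinely diverge is in how that lemma is established. The paper works directly with the composition identity $p_{k,s}=p_{k,i}p_{i,s}+(1-p_{k,i})(1-p_{i,s})+\gamma_{k,i,s}$ and a somewhat delicate sign analysis of the correction terms $\gamma$, splitting inline on whether all relevant $x_j$ lie above or below $0.5$. Your linearization $q^*_i({\bf d})=\tfrac12+2b_i w_{d^*_i}$ with $w_g=(2q_g-1)b_g$ replaces all of this: it makes it immediate that every agent in the same ``camp'' ranks gurus by the single scalar $w_g$, so the follower-monotonicity lemma reduces to the observation that the mover's guru value strictly increases. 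This buys two things the paper's write-up lacks: an explicit treatment of cycle avoidance during the dynamics (the paper tacitly assumes $d^*_i$ and $d^*_j$ exist after each move), and a clean bypass of the paper's somewhat fragile intermediate step deducing $q_sx_s<q_tx_t$ from $u_i({\bf d}')>u_i({\bf d})$. One small imprecision: your sentence ``no agent ever strictly benefits from a cross-camp delegation'' is not true of arbitrary profiles, since a delegation to a neighbor in $N^-$ could still terminate at a guru in $N^+$; but this does not matter, because your actual argument only invokes the claim at a \emph{restricted} equilibrium profile, where every agent's guru provably lies in that agent's own camp, and there the claim holds.
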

\begin{proof}
We prove this statement by showing that the following procedure obtains a NE:
We start with a strategy profile in which all players vote directly, i.e., player $i$'s strategy is $i$. 
Then, we iteratively allow players to choose their individual best response strategy to the current strategy profile.
Players act sequentially in arbitrary order.
If there are no more players that can improve their utility by changing their strategy, we have found a NE.
We prove convergence of this procedure by showing that a best response that increases the player's utility never decreases the utility of other players.

We proceed by induction.
Assume that all previous best responses have not reduced any players' utility (IH).
Assume player $i$ now chooses a best response that increases her utility.
Let ${\bf d}$ be the delegation profile; further, let $d^*_i=s$.
By assumption, $i$'s utility started with $q_i-e_i=q_i$ and has not decreased since, i.e., $u_i({\bf d})\geq q_i$.
Since $i$'s best response strictly increases $i$'s utility, it cannot be a delegation to herself. 
So let a delegation to $j \neq i$ be $i$'s best response and consider profile ${\bf d}'=({\bf d}_{-i},j)$.
Further, let $d_j^*=t$, i.e., $i$ now delegates to $j$ and by transitivity to $t$, i.e., $d_i'^*=d_j'^*=t$.
Let $k$ be some player other than $i$.
We define the delegation path of $k$ as the sequence $({\bf d}(k), {\bf d}({\bf d}(k)), {\bf d}({\bf d}({\bf d}(k))),\dots)$.
If $k$'s delegation path does not contain $i$, then $k$'s utility remains unchanged, i.e., $u_k({\bf d}')\geq u_k({\bf d})$.
If $k$'s delegation path contains $i$, then $k$ now delegates by transitivity to $t$, i.e., we have $d_k^*=s$ and $d_k'^*=t$.
By \eqref{eq:utility}, we have 
\begin{align}
&u_k({\bf d}) = q_s\cdot p_{k,s} + (1-q_{s})\cdot (1-p_{k,s}) \quad\text{ and}\label{eq:uked}\\
&u_k({\bf d}') = q_t\cdot p_{k,t} + (1-q_{t})\cdot (1-p_{k,t}).\label{eq:uke'd'}
\end{align}
We have to show that $k$'s utility does not decrease, i.e., $u_k({\bf d}') \geq u_k({\bf d})$, under the assumption that $i$ chooses a best response, i.e., $u_i({\bf d}') > u_i({\bf d})$,
 with:
\begin{align}
&u_i({\bf d}) = q_s\cdot p_{i,s} + (1-q_{s})\cdot (1-p_{i,s}) \quad\text{ and}\label{eq:uied}\\
&u_i({\bf d}') = q_t\cdot p_{i,t} + (1-q_{t})\cdot (1-p_{i,t}).\label{eq:uie'd'}
\end{align}
In the following we will often use the fact that, for $a,b\in[0,1]$, if $ab+(1-a)(1-b)\geq 0.5$, then either $a,b\in[0,0.5]$ or $a,b\in[0.5,1]$.
By IH, since accuracies are always at least 0.5, it holds that $u_i({\bf d})\geq q_i\geq 0.5$ and by \eqref{eq:uied} we have $q_s \cdot p_{i,s} + (1-q_s)\cdot (1-p_{i,s}) \geq 0.5 $ and hence $p_{i,s}\geq 0.5$.
Analogously, \eqref{eq:uked} implies that $p_{k,s}\geq 0.5$.

Furthermore, we use the fact that 
\begin{equation}
p_{k,i} = p_{k,s}p_{s,i}+(1-p_{k,s})(1-p_{s,i}) + (- 2 (2x_k-1) \cdot (2 x_i - 1) \cdot (x_s - 1) x_s)
\label{eq:fact1}
\end{equation}
where $x_j = {\mathbb P}(\tau(j) = 1)$ for $j \in \set{k,i,s}$. Observe that, by the definition of utility in \eqref{eq:utility}, the assumptions made on ${\bf d}$ and ${\bf d}'$, and the fact that for $a,b\in[0,1]$, if $ab+(1-a)(1-b)\geq 0.5$, then either $a,b\in[0,0.5]$ or $a,b\in[0.5,1]$. So we have that either  $x_j \geq 0.5$ for $j \in \set{k, i, s}$, or $x_j \leq 0.5$ for $j \in \set{k, i, s}$. We work on the first case. The other case is symmetric. Let also $\gamma_{k,s,i} = - 2 (2x_k-1) \cdot (2 x_i - 1) \cdot (x_s - 1) x_s$. From the above it follows that $0.5 \geq \gamma_{k,i,s} \geq 0$. Furthermore, given that $p_{i,s} = p_{s,i} \geq 0.5$, we can also conclude that $p_{k,i}\geq 0.5$.
Now by substituting 
\begin{equation*}
p_{k,s} = p_{k,i}p_{i,s}+(1-p_{k,i})(1-p_{i,s}) + (\underbrace{- 2 (2x_k -1) \cdot (2 x_s - 1) \cdot (x_i - 1) x_i}_{\gamma_{k,i,s}})
\end{equation*}
in \eqref{eq:uked}, we obtain
\begin{equation}
u_k({\bf d}) = (2 p_{k,i}-1) (\overbrace{2q_sp_{i,s}-q_s-p_{i,s}+1}^{u_i({\bf d})}) + 1 - p_{k,i} + \gamma_{k,i,s}(2q_s - 1). 
\label{eq:uked=uied}
\end{equation}

Similarly, using the appropriate instantiation of \eqref{eq:fact1} for $x_j$ with $j \in \set{k,i,t}$, by substituting 
$
p_{k,i}\cdot p_{i,t}+(1-p_{k,i})(1-p_{i,t}) + \gamma_{k,i,t}
$
for $p_{k,t}$ in \eqref{eq:uke'd'} we obtain
\begin{equation}
u_k({\bf d}') = (2 p_{k,i}-1)\cdot (\overbrace{2q_tp_{i,t}-q_t-p_{i,t}+1}^{u_i({\bf d}')}) +  1 - p_{k,i} + \gamma_{k,i,t}(2q_t - 1).
\label{eq:uke'd'=uie'd'}
\end{equation}
Now observe that, since $p_{k,i}\geq 0.5$ we have that $(2 p_{k,i}-1)\geq 0$. It remains to compare $\gamma_{k,i,s}(2q_s - 1)$ with $\gamma_{k,i,t}(2q_t - 1)$, showing the latter is greater than the former. Observe that both expressions have a positive sign. We use the fact that $ab+(1-a)(1-b) < cd+(1-c)(1-d)$ implies $cd > ab$ under the assumption that $a,b,c,d\in[0.5,1]$. On the basis of this, and given that $u_i({\bf d}')> u_i({\bf d})$, we obtain that $q_s\cdot p_{i,s} < q_t\cdot p_{i,t}$ and therefore that $q_s\cdot x_s < q_t\cdot x_t$, from which we can conclude that
\begin{equation*}
\begin{split}
&\big(\overbrace{- 2 (2x_k -1) \cdot (2 x_s - 1) \cdot (x_i - 1) x_i}^{\gamma_{k,i,s}}\big)\cdot(2q_s - 1) \\
&< \big(\underbrace{- 2 (2x_k -1) \cdot (2 x_t - 1) \cdot (x_i - 1) x_i}_{\gamma_{k,i,t}}\big) \cdot(2q_t - 1).
\end{split}
\label{eq:extra}
\end{equation*}
It follows that the assumption $u_i({\bf d}')> u_i({\bf d})$ (player $i$ chose a best response that increased her utility) together with Equations~(\ref{eq:uked=uied}) and~(\ref{eq:uke'd'=uie'd'}) implies that $u_k({\bf d}')> u_k({\bf d})$ (and {\em a fortiori} that $u_k({\bf d}') \geq u_k({\bf d})$). We have therefore shown that if some player chooses a best response, the utility of other players does not decrease. This completes the proof.
\end{proof}

\paragraph{Discussion}

The existence of NE in general delegation games remains an interesting open problem. It should be noted that the proof strategies of both Theorems~\ref{thm:NE-deterministic} and~\ref{thm:delegation-games-NE} do not work in the general case. Without a clear dichotomy of type it is not possible to assign delegations for all agents in an SCC (as we do in the proof of Theorem~\ref{thm:NE-deterministic}). And the key property upon which the proof of Theorem \ref{thm:delegation-games-NE} hinges (that a best response of an agent does not decrease the utility of other agents) fails in the general case due to the presence of non-zero effort.
Finally, it should also be observed that Theorem \ref{thm:delegation-games-NE} (as well as Proposition~\ref{prop:positive})
essentially depend on the assumption that types are \emph{independent} random variables. If this is not the case (e.g., because voters' preferences are correlated), delegation chains can become undesirable.

\begin{example}
Consider the following example with agents $1$, $2$ and $3$. The probability distribution ${\mathbb P}$ is defined as 
${\mathbb P}(\tau(1)=1\wedge \tau(2)=1\wedge \tau(3)=0)=0.45$,
${\mathbb P}(\tau(1)=0\wedge \tau(2)=1\wedge \tau(3)=1)=0.45$, and
${\mathbb P}(\tau(1)=1\wedge \tau(2)=1\wedge \tau(3)=1)=0.1.
$
Consequently, $p_{1,2}=0.55$, $p_{2,3}=0.55$, and $p_{1,3}=0.1$.
Let us assume that the agents' accuracies are $q_1=0.5001$, $q_2=0.51$, and $q_3=0.61$.
A delegation from agent 1 to 2 is locally positive as $q_2\cdot p_{1,2}+(1-q_2)\cdot (1-p_{1,2}) = 0.501 > q_1$.
Furthermore, a delegation from 2 to 3 is locally positive as $q_3\cdot p_{2,3}+(1-q_3)\cdot (1-p_{2,3}) = 0.511 > q_2$.
However, the resulting delegation from $1$ to $3$ is not positive since $q_3\cdot p_{1,3}+(1-q_3)\cdot (1-p_{1,3}) = 0.412$.
\end{example}


\subsection{Quality of Equilibria in Delegation Games}

In delegation games players are motivated to maximize the tradeoff between the accuracy they acquire and the effort they spend for it. A natural measure for the quality of a delegation profile is, therefore, how accurate or informed a random voter becomes as a result of the delegations in the profile, that is, the average accuracy (i.e., $\aac({\bf d}) = \frac{1}{n}\sum_{i\in N} q^*_i({\bf d})$) players enjoy in that profile. One can also consider the utilitarian social welfare $\sw({\bf d}) = \sum_{i \in N} u_i({\bf d})$ of a delegation profile ${\bf d}$. This relates to average accuracy as follows: 
\[
\aac({\bf d}) = \frac{\sw({\bf d}) + \sum_{i \mid d(i) = i} e_i}{n}.
\]

It can immediately be noticed that equilibria do not necessarily maximize 
average accuracy.
On the contrary, in the following example NE yields an average accuracy 
of close to $0.5$, whereas an average accuracy 
of almost $1$ is achievable.

\begin{example} \label{ex:quality}
Consider an $n$-player delegation game where all players have the same type and 
$(i,j) \in R$ for all $j$ and all $i>1$, i.e., player 1 is a sink in $R$ and cannot delegate to anyone, but all others can delegate to everyone. 
Further, we have $e_1=0$ and $e_i=0.5-\epsilon$ for $i\geq 2$. The respective accuracies are $q_1=0.5+2\epsilon$ and $q_i=1$.
If player $i\geq 2$ does not delegate, her utility is $0.5+\epsilon$. Hence, it is always more desirable to delegate to player $1$ (which yields a utility of $0.5+2\epsilon$ for $i$). Consider now the profiles in which all players delegate to player $1$ (either directly or transitively). Player $1$ can only vote directly (with utility $0.5+2\epsilon$). All such profiles are NE with average accuracy $0.5+2\epsilon$.
If, however, some player $j\geq 2$ chose to vote herself, all players (except $1$) would delegate to $j$ thereby obtaining an average accuracy of $1-\frac{0.5-2\epsilon}{n}$, which converges to $1$ for $n\to\infty$. This is not a NE, as $j$ could increase her utility by delegating to $1$.
\end{example}

The findings of the example can be made more explicit by considering a variant of the price of anarchy for delegation games, based on the above notion of average accuracy. That is, for a given delegation game $G$, the price of anarchy ($\PoA$) of $G$ is given by
\[
\PoA(G) = \frac{\max_{{\bf d} \in N^n} \aac({\bf d})}{\min_{{\bf d} \in \NE(G)} \aac({\bf d})},
\]
where $\NE(G)$ denotes the set of pure-strategy NE of $G$.

\begin{fact}
PoA is bounded below by $1$ and above by $2$ (see Apendix~\ref{appendix:proof}).
\end{fact}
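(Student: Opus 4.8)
The plan is to establish the two bounds separately. For the lower bound, I would argue that by construction $\PoA(G)\geq 1$: the numerator $\max_{{\bf d}}\aac({\bf d})$ ranges over all delegation profiles, while the denominator $\min_{{\bf d}\in\NE(G)}\aac({\bf d})$ ranges only over the (nonempty, in the classes where NE exist) subset of equilibrium profiles. Since every NE is itself a delegation profile, the maximum over the larger set is at least the average accuracy of any single NE, and in particular at least the minimum over NE. Hence the ratio is at least $1$. The only subtlety to flag is that the fraction is well-defined, which requires $\NE(G)\neq\emptyset$ and $\min_{{\bf d}\in\NE(G)}\aac({\bf d})>0$; the latter holds because individual accuracies are always $\geq 0.5$, so $\aac({\bf d})\geq 0.5>0$ for every profile.

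For the upper bound, the key observation is that $\aac({\bf d})$ is tightly constrained: by \eqref{eq:accuracy} every $q^*_i({\bf d})\geq 0.5$ (the cycle case gives exactly $0.5$, and otherwise $q^*_i({\bf d})=q_{d^*_i}p_{i,d^*_i}+(1-q_{d^*_i})(1-p_{i,d^*_i})$, which lies in $[0.5,1]$ whenever $q_{d^*_i}\geq 0.5$), and trivially $q^*_i({\bf d})\leq 1$. Averaging over the $n$ agents gives $0.5\leq\aac({\bf d})\leq 1$ for every profile ${\bf d}$. First I would prove these per-agent bounds, then average them. The numerator is then at most $1$ and the denominator at least $0.5$, so
\[
\PoA(G)=\frac{\max_{{\bf d}}\aac({\bf d})}{\min_{{\bf d}\in\NE(G)}\aac({\bf d})}\leq\frac{1}{0.5}=2.
\]

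The main obstacle, such as it is, lies in the upper-bound direction: one must verify carefully that $q^*_i({\bf d})\geq 0.5$ even in the non-cycle case, i.e., that delegating never drags an agent's individual accuracy below $0.5$. This follows from the elementary fact used repeatedly in the proof of Theorem~\ref{thm:delegation-games-NE}: for $a,b\in[0,1]$, if $a\geq 0.5$ then $ab+(1-a)(1-b)=2ab-a-b+1\geq 0.5$ exactly when $b\geq 0.5$, and more robustly the expression $q_{d^*_i}p_{i,d^*_i}+(1-q_{d^*_i})(1-p_{i,d^*_i})$ is minimized over $p_{i,d^*_i}\in[0,1]$ at $1-q_{d^*_i}\geq\ldots$; since $q_{d^*_i}\geq 0.5$ the minimum value is $1-q_{d^*_i}$, which can be below $0.5$. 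This is the genuinely delicate point: for the bound $q^*_i({\bf d})\geq 0.5$ to hold one needs $p_{i,d^*_i}\geq 0.5$, which is not automatic for arbitrary profiles. I would therefore restrict the numerator's maximand using the correct lower envelope, or alternatively note that the relevant profiles for both numerator and denominator satisfy $p_{i,d^*_i}\geq 0.5$ (otherwise an agent would do strictly better voting directly, contradicting the definitions). Handling this case distinction cleanly is where the real content of the upper bound resides; once the per-agent range $[0.5,1]$ is secured, the averaging and the division are immediate.
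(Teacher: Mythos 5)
Your proposal is correct in substance, but it proves a different half of the story than the paper does, so the two are worth contrasting. The paper's appendix proof only establishes that the constants $1$ and $2$ are \emph{tight}: it exhibits a game with $\PoA=1$ (take $R=\emptyset$, so the all-vote profile is the only profile and the only NE) and, via Example~\ref{ex:quality}, games with $\PoA$ arbitrarily close to $2$; it treats the validity of the bounds themselves as immediate. You do the converse: you prove $1\leq\PoA(G)\leq 2$ for every game (where NE exist) and say nothing about tightness. Your lower-bound argument --- the numerator maximizes over a superset of the set the denominator minimizes over --- is exactly the right one. The two arguments are complementary; a complete treatment of the Fact really needs both.

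On the upper bound, your final diagnosis is essentially right but slightly misaimed. The numerator needs only the \emph{upper} bound $\aac({\bf d})\leq 1$, which holds unconditionally: $q^*_i({\bf d})=2qp-q-p+1$ with $q=q_{d^*_i}\geq 0.5$ is nondecreasing in $p=p_{i,d^*_i}$ and equals $q\leq 1$ at $p=1$, so no ``lower envelope'' for the numerator is needed. The place where $q^*_i\geq 0.5$ genuinely matters is the \emph{denominator}, and there the NE property supplies it directly: in any NE, agent $i$'s utility is at least the deviation payoff $u_i({\bf d}_{-i},i)=q_i-e_i\geq 0.5$, and for a delegating agent that utility \emph{is} $q^*_i({\bf d})$, so every NE satisfies $\aac({\bf d})\geq 0.5$ and the ratio is at most $2$. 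Note also that your second paragraph's claim that $q^*_i({\bf d})\in[0.5,1]$ ``whenever $q_{d^*_i}\geq 0.5$'' is false for arbitrary profiles (take $q_{d^*_i}=1$ and $p_{i,d^*_i}=0$ to get $q^*_i=0$); you correctly retract this in your third paragraph, and with the denominator-side repair above the argument goes through.
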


An informative performance metrics for liquid democracy is the difference between the group accuracy after delegations versus the group accuracy achieved by direct voting. This measure, called {\em gain}, was introduced and studied by \citet{kahng18liquid}. Here we adapt it to our setting as follows:
$
    \g(G)  = \left(\min_{{\bf d} \in \NE(G)} \aac({\bf d})\right) - \bar{q}
$   
where $\bar{q} = \aac(\tuple{1, \ldots, n})$.
That is, the gain in the delegation game $G$ is the average accuracy of the worst NE minus the average accuracy of 
the profile in which no voter delegates. It turns out that the full performance range is possible:
\begin{fact}
$\g$ is bounded below by $-0.5$ and above by $0.5$ (see Apendix~\ref{appendix:proof}).
\end{fact}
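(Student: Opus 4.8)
The plan is to prove both bounds by pinning down the range of $\aac$ separately on equilibria and on the no-delegation profile, and then reading the gain off as a difference of two quantities each lying in $[0.5,1]$. The one substantive observation is that \emph{in any NE every agent's accuracy is at least $0.5$}. Indeed, voting directly (playing $d_i=i$) is always an available strategy, and it yields utility $q_i-e_i\geq 0.5$; hence in a NE ${\bf d}$ no agent can do worse, so $u_i({\bf d})\geq q_i-e_i\geq 0.5$. Moreover accuracy never falls below utility: when $i$ delegates we have $q^*_i({\bf d})=u_i({\bf d})$, and when $i$ votes we have $q^*_i({\bf d})=q_i\geq q_i-e_i=u_i({\bf d})$. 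Thus $q^*_i({\bf d})\geq u_i({\bf d})\geq 0.5$ for every $i$, and since trivially $q^*_i({\bf d})\leq 1$, averaging gives $\aac({\bf d})\in[0.5,1]$ for every NE ${\bf d}$.

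Next I would note that $\bar{q}=\aac(\tuple{1,\ldots,n})=\frac1n\sum_{i\in N}q_i$ (each agent is her own guru, so $q^*_i=q_i$), and since each $q_i\in[0.5,1]$ we also get $\bar{q}\in[0.5,1]$. Combining the two ranges, for any game $G$ with $\NE(G)\neq\emptyset$,
\begin{align*}
\g(G)&=\Big(\min_{{\bf d}\in\NE(G)}\aac({\bf d})\Big)-\bar{q}\ \leq\ 1-0.5=0.5,\\
\g(G)&=\Big(\min_{{\bf d}\in\NE(G)}\aac({\bf d})\Big)-\bar{q}\ \geq\ 0.5-1=-0.5,
\end{align*}
which establishes the two bounds.

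To argue the range is essentially tight I would exhibit two families of games. Example~\ref{ex:quality} already serves the lower end: there the worst NE has average accuracy $0.5+2\epsilon$ while $\bar{q}=1-\frac{0.5-2\epsilon}{n}\to 1$, so $\g\to -0.5$ as $n\to\infty$ and $\epsilon\to 0$. For the upper end I would take a homogeneous effortless game (so $p_{i,j}=1$ throughout and NE exist by Theorem~\ref{thm:delegation-games-NE}) with a single ``guru'' $n$ having $q_n=1$, every other agent having accuracy $q_i=0.5+\delta$, and $R$ connecting everyone to $n$. Each non-guru's unique best response is to reach $n$ (thereby inheriting accuracy $1$, which strictly beats voting at $0.5+\delta$), so every NE has $\aac=1$, whereas $\bar{q}=\frac1n\big((n-1)(0.5+\delta)+1\big)\to 0.5+\delta$; letting $n\to\infty$ and $\delta\to 0$ drives $\g\to 0.5$.

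The bounds themselves are immediate arithmetic once the per-agent inequality $q^*_i({\bf d})\geq 0.5$ at equilibrium is in hand, so the only place demanding care is the tightness constructions. There one must verify that the exhibited profiles are genuinely Nash equilibria and, for the lower bound, that the \emph{worst} equilibrium (the minimiser in the definition of $\g$) is indeed the low-accuracy one rather than an accidentally better profile; this is precisely what Example~\ref{ex:quality} is designed to guarantee, and the guru construction is analogous. The residual computations of $\aac$ and $\bar{q}$ are routine.
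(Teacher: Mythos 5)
Your proposal is correct and follows essentially the same route as the paper: the lower bound is witnessed by Example~\ref{ex:quality} and the upper bound by an effortless homogeneous game with a single perfectly accurate guru to whom everyone can delegate, exactly as in the paper's proof. The only addition is that you spell out the (easy) containment $\aac({\bf d}),\bar{q}\in[0.5,1]$ via the unilateral-deviation-to-direct-voting argument, which the paper leaves implicit.
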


The above bounds for PoA and gain provide only a very partial picture of the performance of liquid democracy when modeled as a delegation game. The next section provides a more fine-grained perspective on the effects of delegations. 


\section{Simulations} 

We simulate the delegation game described above in a variety of settings. We restrict ourselves to homogeneous games. This allows us to relate our results to those of \citet{kahng18liquid}. Our experiments serve to both verify and extend the theoretical results of the previous section. In particular we simulate the best response dynamics employed in the proof of Theorem \ref{thm:delegation-games-NE} and show that these dynamics converge even in the setting with effort, which we could not establish analytically.
In addition, we investigate the dynamics of a one-shot game scenario, in which all agents need to select their proxy simultaneously at once.
%


\paragraph{Setup}
We generate graphs of size $N=250$ of each of the four topologies \textit{random}, \textit{regular}, \textit{small world}, and \textit{scale free}, for different average degrees, while ensuring that the graph is connected.
Agents' individual accuracy and effort are initialized randomly with $q_i \in [0.5,1]$ and $q_i - e_i \geq 0.5$. 
We average results over 2500 simulations for each setting (25 randomly generated graphs $\times$ 100 initializations). Agents correctly observe their own accuracy and effort, and the accuracy of their neighbors. The game is homogeneous, so proximities are $1$.

Each agent $i$ selects from her neighborhood $R(i)$ (which includes $i$ herself) the agent $j$ that maximizes her expected utility following \eqref{eq:utility}. We compare two scenarios. The {\em iterated best response} scenario follows the procedure of the proof of Theorem~\ref{thm:delegation-games-NE}, in which agents sequentially update their proxy to best-respond to the current profile ${\bf d}$ using knowledge of their neighbors' accuracy $q_i^*({\bf d})$. In the {\em one-shot game} scenario all agents choose their proxy only once, do so simultaneously, and based only on their neighbors' accuracy. The latter setup captures more closely the epistemic limitations that agents face in liquid democracy.


\subsection{Iterated Best Response Dynamics}

These experiments complement our existence theorems. They offer insights into the effects of delegations on average voter's accuracy in equilibrium, and on the effects of different network structures on how such equilibria are achieved.

We initialize $q_i \sim \mathcal{N}(0.75,0.05)$ and first investigate the case in which $e_i = 0$ for all $i$ (effortless voting). Across all combinations of network types and average degrees ranging from 4 to 24, we find that the best response dynamics converges, as predicted by Theorem~\ref{thm:delegation-games-NE}, and does so optimally with $d_i^* = \argmax_{j} q_j$ for all $i$. We observe minimal differences between network types, but see a clear inverse relation between average degree and the 
number of iterations required to converge 
(Table~\ref{tab:convergence_rate}, top). Intuitively, more densely connected networks facilitate agents in identifying their optimal proxies (further details are provided in Appendix~\ref{appendix:simulation}).



We accumulate results across all network types and compare the effortless setting to the case in which effort is taken into account. When we include effort $e_i \sim \mathcal{N}(0.025,0.01)$, we still observe convergence in all cases and, interestingly, the number of iterations required does not change significantly (Table~\ref{tab:convergence_rate}, bottom). Although the process no longer results in an optimal equilibrium, each case still yields a single guru $j$ with $q_j \approx \max_k q_k$ (less than $1\%$ error) for all $k \in N$. In this scenario, the inclusion of effort means that a best response update of agent $i$ no longer guarantees non-decreasing accuracy and utility for all other agents, which was a key property in the proof of Theorem \ref{thm:delegation-games-NE}. This effect becomes stronger as the average network degree increases, and as a result higher degree networks allow a greater discrepancy between the maximal average accuracy achievable and the average accuracy obtained at stabilization (Table~\ref{tab:br_accuracy}).

\begin{table}[tb]
    \caption{Total number of best response updates by individual agents and corresponding full passes over the network required for convergence. Reported are the mean (std.dev.) over all network types. \textit{Note}: not all agents update their delegation at each full pass, but any single update requires an additional pass to check whether the best response still holds.}
    \label{tab:convergence_rate}
    \begin{widetable}{\columnwidth}{l|cccccc}
    Degree & 4 & 8 & 12 & 16 & 20 & 24 \\
    \hline
    BR updates & 298.1 & 261.7 & 254.2 & 251.6 & 250.6 & 250.0 \\
    (effortless) & (18.2) & (11.1) & (6.9) & (4.5) & (3.3) & (2.6) \\[0.2em]
    Full passes & 3.6 & 3.0 & 2.9 & 2.8 & 2.7 & 2.5 \\
    (effortless) & (0.5) & (0.1) & (0.2) & (0.4) & (0.5) & (0.5) \\[0.2em]
    \hline
    BR updates & 294.7 & 259.4 & 252.9 & 250.9 & 250.2 & 249.9 \\
    (with effort) & (18.4) & (10.6) & (6.6) & (4.8) & (4.9) & (4.3) \\[0.2em]
    Full passes & 3.6 & 3.0 & 2.8 & 2.6 & 2.4 & 2.4 \\
    (with effort) & (0.5) & (0.3) & (0.6) & (0.8) & (0.9) & (1.0)
    \end{widetable}
\end{table}

\begin{table}[tb]
    \caption{Comparing the maximum accuracy across all agents and the mean accuracy under delegation ${\bf d}$ for different network degrees, averaged across network types. The differences are statistically significant (paired t-test, $p=0.05$).} 
    \label{tab:br_accuracy}
    \begin{widetable}{\columnwidth}{l|cccccc}
    Degree & 4 & 8 & 12 & 16 & 20 & 24 \\
    \hline
    $\max_j q_j$ & 0.8908 & 0.8908 & 0.8904 & 0.8909 & 0.8904 & 0.8910 \\
    $\bar{q}^*({\bf d})$ & 0.8906 & 0.8903 & 0.8897 & 0.8899 & 0.8890 & 0.8893
    \end{widetable}
\end{table}

In lower degree graphs (e.g. degree 4) we further observe differences in convergence speed between the four different network types which coincide with differences between the average path lengths in those graphs: a shorter average distance between nodes yields a lower number of best response updates.
This is intuitive, as larger distances between nodes mean longer delegation chains, but we have not yet conducted statistical tests to verify this hypothesis.





\subsection{One-Shot Delegation Games}

Here we study one-shot interactions in a delegation game:
all agents select their proxy (possibly themselves) simultaneously among their neighbors; no further response is possible. This contrasts the previous scenario in which agents could iteratively improve their choice based on the choices of others.
While \citet{kahng18liquid} study a probabilistic model, we instead assume that agents deterministically select as proxy the agent $j \in R(i)$ that maximizes their utility, as above. We compare $\bar{q}$ and $\bar{q}^*$ (the average network accuracy without and with delegation, respectively), as well as the probability of a correct majority vote under both direct democracy $P_D$ and liquid democracy $P_L$ where gurus carry as weight the number of agents for whom they act as proxy. The difference $P_L - P_D$ is again similar to the notion of {\em gain} \citep{kahng18liquid}. In line with Condorcet's jury theorem (see for instance \citealt{grofman83thirteen}) $P_D \rightarrow 1$ as $N \rightarrow \infty$, and indeed for $N=250$ we obtain $P_D \approx 1$.

First we again look at the effortless setting. Figure~\ref{fig:acc_maj_n250} (top) shows both metrics for the four different network types and for different average degrees. 
\begin{figure}[tb]
	\centering
	\includegraphics[width=0.9\linewidth]{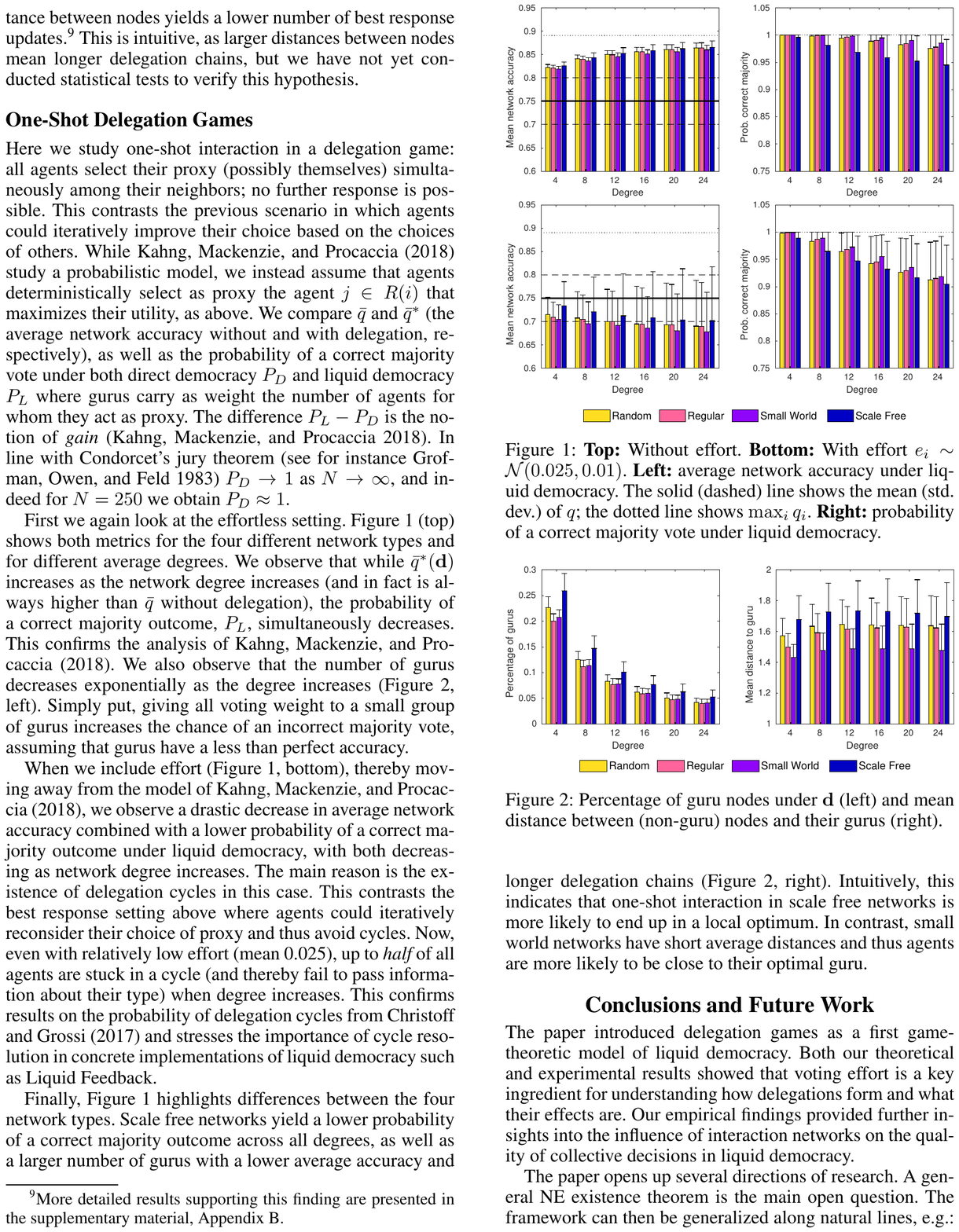}
	\caption{{\bf Top:} Without effort. {\bf Bottom:} With effort $e_i \sim \mathcal{N}(0.025, 0.01)$. {\bf Left:} mean accuracy under liquid democracy, $\bar{q}^*({\bf d})$. The solid (dashed) line shows the mean (std. dev.) of the initial accuracy $q$; the dotted line shows $\max_i q_i$. {\bf Right:} probability of a correct majority vote under ${\bf d}$.}
	\label{fig:acc_maj_n250}
\end{figure}
We observe that while $\bar{q}^*({\bf d})$ increases as the network degree increases (and in fact is always higher than $\bar{q}$ without delegation), the probability of a correct majority outcome, $P_L$, simultaneously decreases. This confirms the analysis of \citet{kahng18liquid}. We also observe that the number of gurus decreases exponentially as the degree increases (Figure~\ref{fig:perc_dist_n250_q0.75_e0}, left). Simply put, giving all voting weight to a small group of gurus increases the chance of an incorrect majority vote, assuming that gurus have a less than perfect accuracy.

\begin{figure}[tb]
    \centering
    \includegraphics[width=0.9\linewidth]{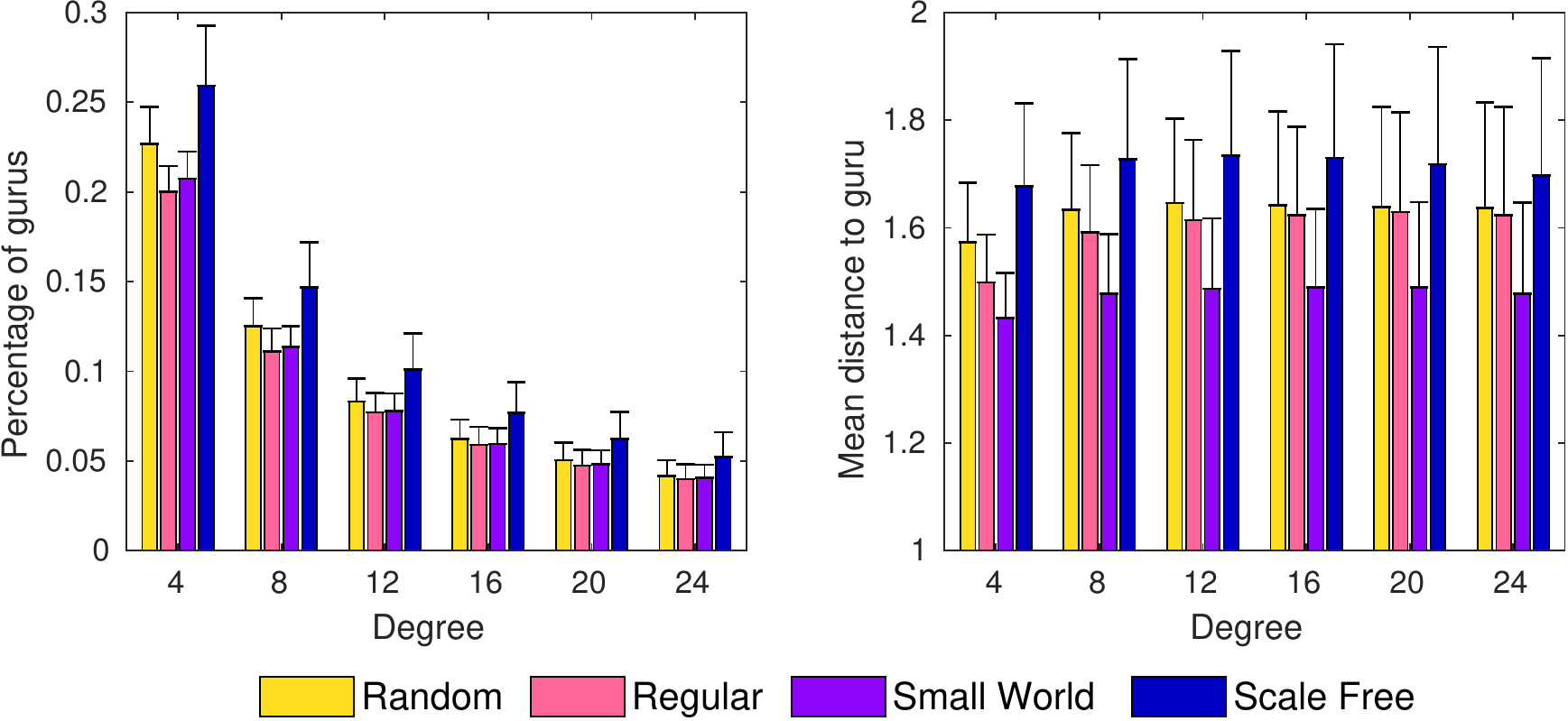}
    \caption{Percentage of guru nodes under ${\bf d}$ (left) and mean distance between (non-guru) nodes and their gurus (right).}
    \label{fig:perc_dist_n250_q0.75_e0}
\end{figure}

When we include effort (Figure~\ref{fig:acc_maj_n250}, bottom), thereby moving away from the model of \citet{kahng18liquid}, we observe a drastic decrease in average network accuracy combined with a lower probability of a correct majority outcome under liquid democracy, with both decreasing as network degree increases. The main reason is the existence of delegation cycles in this case. This contrasts the best response setting above where agents could iteratively reconsider their choice of proxy and thus avoid cycles. Now, even with relatively low effort (mean 0.025), up to {\em half} of all agents are stuck in a cycle (and thereby fail to pass information about their type) when the degree increases. 
This confirms results on the probability of delegation cycles from \citet{christoff17binary} and stresses the importance of cycle resolution in concrete implementations of liquid democracy. 



Figure~\ref{fig:acc_maj_n250} further highlights differences between the four network types. Scale free networks yield a lower probability of a correct majority outcome across all degrees, as well as a larger number of gurus with a lower average accuracy and longer delegation chains (Figure~\ref{fig:perc_dist_n250_q0.75_e0}, right). Intuitively, this indicates that one-shot interactions in scale free networks are more likely to end up in a local optimum. In contrast, small world networks have short average distances 
and thus agents are more likely to be close to their optimal guru. 
Finally, our experiments highlight a key feature of liquid democracy: the trade-off between a reduction in (total) effort against a loss in average voting accuracy.



\section{Conclusions and Future Work} 

The paper introduced delegation games as a first game-theoretic model of liquid democracy. Both our theoretical and experimental results showed that voting effort is a key ingredient for understanding how delegations form and what their effects are.
Our empirical findings provided further insights 
into the influence of interaction networks on the quality of collective decisions in liquid democracy. 

The paper opens up several directions of research.
A general NE existence theorem is the main open question. Our model can then be generalized in many directions, e.g.: by making agents' utility dependent on voting outcomes; by dropping the independence assumption on agents' types; or by assuming the voting mechanism has better than $0.5$ accuracy in identifying the types of agents involved in cycles.

\bibliography{liquid_bib.bib}

\begin{thebibliography}{}

\bibitem[\protect\citeauthoryear{Alger}{Alger}{2006}]{Alger_2006}
Alger, D. (2006).
\newblock {Voting by proxy}.
\newblock {\em Public Choice}, {\em 126\/}(1-2), 1--26.

\bibitem[\protect\citeauthoryear{Barab{\'a}si \& Albert}{Barab{\'a}si \&
  Albert}{1999}]{barabasi1999emergence}
Barab{\'a}si, A.-L. \& Albert, R. (1999).
\newblock Emergence of scaling in random networks.
\newblock {\em science}, {\em 286\/}(5439), 509--512.

\bibitem[\protect\citeauthoryear{Blum \& Zuber}{Blum \&
  Zuber}{2016}]{blum2016liquid}
Blum, C. \& Zuber, C.~I. (2016).
\newblock Liquid democracy: Potentials, problems, and perspectives.
\newblock {\em Journal of Political Philosophy}, {\em 24\/}(2), 162--182.

\bibitem[\protect\citeauthoryear{Boella, Francis, Grassi, Kistner, Nitsche,
  Noskov, Sanasi, Savoca, Schifanella \& Tsampoulatidis}{Boella
  et~al.}{2018}]{boella2018wegovnow}
Boella, G., Francis, L., Grassi, E., Kistner, A., Nitsche, A., Noskov, A.,
  Sanasi, L., Savoca, A., Schifanella, C., \& Tsampoulatidis, I. (2018).
\newblock {WeGovNow}: A map based platform to engage the local civic society.
\newblock In {\em {WWW} '18 Companion: The 2018 Web Conference Companion},
  (pp.\ 1215--1219)., Lyon, France. International World Wide Web Conferences
  Steering Committee.

\bibitem[\protect\citeauthoryear{Boldi, Bonchi, Castillo \& Vigna}{Boldi
  et~al.}{2011}]{boldi2011viscous}
Boldi, P., Bonchi, F., Castillo, C., \& Vigna, S. (2011).
\newblock Viscous democracy for social networks.
\newblock {\em Communications of the ACM}, {\em 54\/}(6), 129--137.

\bibitem[\protect\citeauthoryear{Brandt, Conitzer, Endriss, Lang \&
  Procaccia}{Brandt et~al.}{2016}]{brandt16handbook}
Brandt, F., Conitzer, V., Endriss, U., Lang, J., \& Procaccia, A.~D. (Eds.).
  (2016).
\newblock {\em Handbook of Computational Social Choice}.
\newblock Cambridge University Press.

\bibitem[\protect\citeauthoryear{Brill}{Brill}{2018}]{brill2018interactive}
Brill, M. (2018).
\newblock Interactive democracy.
\newblock In {\em Proceedings of the 17th International Conference on
  Autonomous Agents and MultiAgent Systems}, (pp.\ 1183--1187). International
  Foundation for Autonomous Agents and Multiagent Systems.

\bibitem[\protect\citeauthoryear{Christoff \& Grossi}{Christoff \&
  Grossi}{2017}]{christoff17binary}
Christoff, Z. \& Grossi, D. (2017).
\newblock Binary voting with delegable proxy: An analysis of liquid democracy.
\newblock In {\em Proceedings of {TARK'17}}, volume 251, (pp.\ 134--150).
  EPTCS.

\bibitem[\protect\citeauthoryear{Cohensius, Mannor, Meir, Meirom \&
  Orda}{Cohensius et~al.}{2017}]{cohensius2017proxy}
Cohensius, G., Mannor, S., Meir, R., Meirom, E., \& Orda, A. (2017).
\newblock Proxy voting for better outcomes.
\newblock In {\em Proceedings of the 16th Conference on Autonomous Agents and
  MultiAgent Systems}, (pp.\ 858--866). International Foundation for Autonomous
  Agents and Multiagent Systems.

\bibitem[\protect\citeauthoryear{Dodgson}{Dodgson}{1884}]{dodgson84principles}
Dodgson, C.~L. (1884).
\newblock {\em The Principles of Parliamentary Representation}.
\newblock Harrison and Sons.

\bibitem[\protect\citeauthoryear{Endriss}{Endriss}{2016}]{endriss16judgment}
Endriss, U. (2016).
\newblock Judgment aggregation.
\newblock In F.~Brandt, V.~Conitzer, U.~Endriss, J.~Lang, \& A.~D. Procaccia
  (Eds.), {\em Handbook of Computational Social Choice}. Cambridge University
  Press.

\bibitem[\protect\citeauthoryear{Erd{\"o}s \& R{\'e}nyi}{Erd{\"o}s \&
  R{\'e}nyi}{1959}]{erdos1959random}
Erd{\"o}s, P. \& R{\'e}nyi, A. (1959).
\newblock On random graphs, {I}.
\newblock {\em Publicationes Mathematicae (Debrecen)}, {\em 6}, 290--297.

\bibitem[\protect\citeauthoryear{G{\"o}lz, Kahng, Mackenzie \&
  Procaccia}{G{\"o}lz et~al.}{2018}]{golz2018fluid}
G{\"o}lz, P., Kahng, A., Mackenzie, S., \& Procaccia, A.~D. (2018).
\newblock The fluid mechanics of liquid democracy.
\newblock In {\em WADE'18, arXiv:1808.01906}.

\bibitem[\protect\citeauthoryear{Green-Armytage}{Green-Armytage}{2015}]{green2015direct}
Green-Armytage, J. (2015).
\newblock Direct voting and proxy voting.
\newblock {\em Constitutional Political Economy}, {\em 26\/}(2), 190--220.

\bibitem[\protect\citeauthoryear{Grofman, Owen \& Feld}{Grofman
  et~al.}{1983}]{grofman83thirteen}
Grofman, B., Owen, G., \& Feld, S. (1983).
\newblock Thirteen theorems in search of truth.
\newblock {\em Theory and Decision}, {\em 15}, 261--278.

\bibitem[\protect\citeauthoryear{Grossi \& Pigozzi}{Grossi \&
  Pigozzi}{2014}]{Grossi_2014}
Grossi, D. \& Pigozzi, G. (2014).
\newblock {\em {Judgment Aggregation: A Primer}}.
\newblock Synthesis Lectures on Artificial Intelligence and Machine Learning.
  Morgan {\&} Claypool Publishers.

\bibitem[\protect\citeauthoryear{Jackson}{Jackson}{2008}]{jackson08social}
Jackson, M.~O. (2008).
\newblock {\em Social and Economic Networks.}
\newblock Princeton University Press.

\bibitem[\protect\citeauthoryear{Kahng, Mackenzie \& Procaccia}{Kahng
  et~al.}{2018}]{kahng18liquid}
Kahng, A., Mackenzie, S., \& Procaccia, A. (2018).
\newblock Liquid democracy: An algorithmic perspective.
\newblock In {\em Proc. 32nd AAAI Conference on Artificial Intelligence
  (AAAI'18)}.

\bibitem[\protect\citeauthoryear{Kling, Kunegis, Hartmann, Strohmaier \&
  Staab}{Kling et~al.}{2015}]{kling2015voting}
Kling, C.~C., Kunegis, J., Hartmann, H., Strohmaier, M., \& Staab, S. (2015).
\newblock Voting behaviour and power in online democracy: A study of
  liquidfeedback in germany's pirate party.
\newblock In {\em Ninth International AAAI Conference on Web and Social Media}.

\bibitem[\protect\citeauthoryear{Miller}{Miller}{1969}]{miller1969program}
Miller, J.~C. (1969).
\newblock A program for direct and proxy voting in the legislative process.
\newblock {\em Public choice}, {\em 7\/}(1), 107--113.

\bibitem[\protect\citeauthoryear{Skowron, Lackner, Brill, Peters \&
  Elkind}{Skowron et~al.}{2017}]{skowron17proportional}
Skowron, P., Lackner, M., Brill, M., Peters, D., \& Elkind, E. (2017).
\newblock Proportional rankings.
\newblock In {\em Proceedings of the Twenty-Sixth International Joint
  Conference on Artificial Intelligence, {IJCAI} 2017, Melbourne, Australia,
  August 19-25, 2017}, (pp.\ 409--415).

\bibitem[\protect\citeauthoryear{Tullock}{Tullock}{1992}]{Tullock_1992}
Tullock, G. (1992).
\newblock {Computerizing politics}.
\newblock {\em Mathematical and Computer Modelling}, {\em 16\/}(8-9), 59--65.

\bibitem[\protect\citeauthoryear{Watts \& Strogatz}{Watts \&
  Strogatz}{1998}]{watts1998collective}
Watts, D.~J. \& Strogatz, S.~H. (1998).
\newblock Collective dynamics of ‘small-world’ networks.
\newblock {\em Nature}, {\em 393\/}(6684), 440.

\end{thebibliography}
\bibliographystyle{newapa}


\newpage

\appendix

\section{Proofs of Proposition 1, Fact 1, and Fact 2}
\label{appendix:proof}

Here we provide the proofs op Proposition~1, Fact~1, and Fact~2.

\begin{proof}[Proof of Proposition 1]
First we show that $q^*_s({\bf d}') > q^*_s({\bf d}) = q_s$.
If $d_t=t$, the statement holds since the delegation from $s$ to $t$ is locally positive.
Otherwise, let $d_t=r$.
Since ${\bf d}$ is positive, we know that 
$$
q^*_t({\bf d}) = q_r\cdot p_{t,r} + (1-q_r)\cdot (1-p_{t,r}) >q_t.
$$
Now let $x_i = {\mathbb P}(\tau(i) = 1)$ for $i \in \set{s,t,r}$.
Since types are independent random variables, it holds that $p_{i,j} = x_ix_j+(1-x_i)(1-x_j)$.
Using this fact, we can verify that
\begin{align}
\begin{split}
p_{s,r} = {} &
p_{s,t}p_{t,r}+(1-p_{s,t})(1-p_{t,r}) \\ & \quad + (- 2 (2x_s-1) \cdot (2 x_r - 1) \cdot (x_t - 1) x_t).
\end{split}
\label{eq:fact}
\end{align}
We now want to prove that
\begin{align}- 2 (2x_s-1) \cdot (2 x_r - 1) \cdot (x_t - 1) x_t \geq 0.
\label{eq:intermediateinequ}
\end{align}
First note that a locally positive delegation from $i$ to $j$ implies that $p_{i,j}\geq 0.5$ (since $q_i,q_j\geq 0.5$); hence $p_{s,t}\geq 0.5$ and $p_{t,r}\geq 0.5$. 
Second, observe that for any $a,b\in[0,1]$, if $ab+(1-a)(1-b)\geq 0.5$, then either $a,b\in[0,0.5]$ 
or $a,b\in[0.5,1]$.
Now, since $p_{s,t} = x_sx_t+(1-x_s)(1-x_t)\geq 0.5$ and $p_{t,r} = x_tx_r+(1-x_t)(1-x_r)\geq 0.5$, we conclude that either $x_s,x_t,x_r \in [0,0.5]$ or $x_s,x_t,x_r \in [0.5,1]$. 
In both cases \eqref{eq:intermediateinequ} holds.

From \eqref{eq:intermediateinequ} and \eqref{eq:fattolino} it follows that:%
\begin{equation}
\begin{split}
p_{s,r} & \geq p_{s,t}p_{t,r}+(1-p_{s,t})(1-p_{t,r}) \\ & = 2p_{s,t}p_{t,r}-p_{s,t}-p_{t,r}+1.
\end{split}
\label{eq:fattolino}
\end{equation}

Thus we obtain:
\begin{align*}
q^*_s({\bf d}') &= q_r p_{s,r} + (1- q_r) (1-p_{s,r}) \\ 
&= 2 q_r p_{s,r} - q_r - p_{s,r} + 1 \\
& \geq \underbrace{2q_r}_{\geq 1}(\underbrace{2p_{s,t}p_{t,r} - p_{t,r} - p_{s,t} +1}_\text{$\geq p_{s,r}$ by \eqref{eq:fattolino}}) - q_r - (2p_{s,t}p_{t,r} - p_{t,r} - p_{s,t} +1) + 1 \\
&= 4q_r p_{s,t} p_{t,r} - 2 q_r p_{t,r} - 2 q_r p_{s,t} + 2 q_r - q_r - 2 p_{s,t}p_{t,r} + p_{t,r} + p_{s,t} -1 + 1 \\
&=  4q_r p_{s,t} p_{t,r} - 2 p_{s,t}p_{t,r}  - 2 q_r p_{s,t} + 2 p_{s,t} - 2 q_r p_{t,r} + q_r + p_{t,r} - 1 + 1 - p_{s,t} \\
&= (\underbrace{2p_{s,t}-1}_{> 0})(\underbrace{2 q_r p_{t,r} - q_r - p_{t,r} + 1}_\text{$>q_t$ since ${\bf d}$ is positive}) + 1 -  p_{s,t} \\
&> (2 p_{s,t} - 1) q_t + 1 -  p_{s,t} \\
&= \underbrace{2 q_t p_{s,t}  - q_t - p_{s,t} + 1}_\text{$>q_s$ as ${\bf d}'_s = t$ is locally positive} \\ 
&> q_s.
\end{align*}
We conclude that $q^*_s({\bf d}') > q_s$.
It remains to show that for an agent $z$ with $d_z=s$, it still holds that $q^*_z({\bf d}')> q_z$.
This can be shown by the same argument as above, now using the fact that the delegations from $z$ to $s$ and from $s$ to $r$ are positive.
Hence ${\bf d}'$ is positive.
\end{proof}

\begin{proof}[Proof of Fact 1]
For the upper bound of $2$, Example \ref{ex:quality} shows that maximal average accuracy can be made arbitrarily close to $1$, while the average accuracy of a worse NE can be made arbitrarily close to $0.5$. For the lower bound of $1$ it suffices to consider any delegation game were $R = \emptyset$ (that is, no delegation is possible). 
\end{proof}

\begin{proof}[Proof of Fact 2]
For the lower bound we can use again Example \ref{ex:quality}, where the average accuracy of the worse NE can be made arbitrarily close to $0.5$, and the average accuracy of direct voting (which given the structure of the example is maximized by direct voting) can be made arbitrarily close to $1$. 
For the upper bound, consider an effortless homogeneous game where $q_1 = 1$, and $q_i = 0.5 + \epsilon$ for any $i \neq 1$ in $N$. Assume furthermore that $R$ is such that any agent can delegate only to $1$. In such a game, the average accuracy of the unique NE is $1$, while the average accuracy of direct voting can be made arbitrarily close to $0.5$. We thereby obtain the desired bound of $0.5$.
\end{proof}

\section{Additional Simulation Results}
\label{appendix:simulation}

Detailed results for the ``Simulations'' subsection ``Iterated Best Response Dynamics''.  Figure~\ref{fig:iter_br_n250_q0.75_e0} shows convergence speed results in the effortless setting for different network types. The larger differences between the required number of best response updates for lower degree graphs of different types (e.g. degree 4) coincide with differences between the mean distance between nodes in those graphs: a shorter average distance yields a lower number of best response updates. The latter is shown in Table~\ref{tab:dist_vs_br}.

\begin{figure}[h!]
    \centering
    \includegraphics[width=\linewidth]{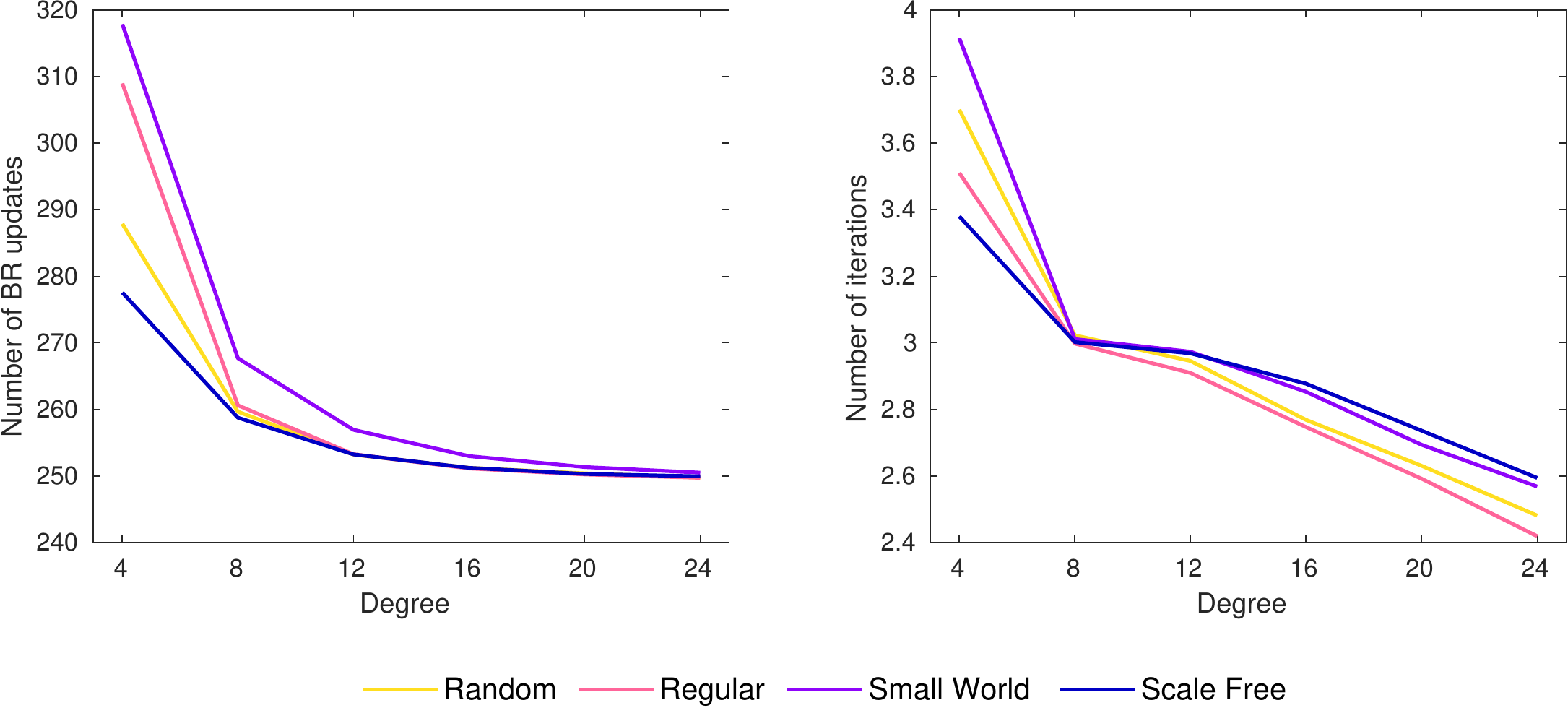}
    \caption{Total number of best response updates of the individual agents (left) and corresponding number of full iterations over the network  (right) for different network types and degrees in the effortless setting.}
    \label{fig:iter_br_n250_q0.75_e0}
\end{figure}

\begin{table}[h!]
    \centering
    \caption{Mean distance between nodes and required number of best response updates for degree-4 networks of different types, sorted from high to low.}
    \label{tab:dist_vs_br}
    \begin{widetable}{\columnwidth}{l|cccc}
    Type & Small World & Regular & Random & Scale Free \\
    \hline
    Mean distance & 4.97 & 4.39 & 4.03 & 3.41 \\
    BR updates & 317.88 & 308.98 & 287.88 & 277.55 \\
    \end{widetable}
\end{table}

\end{document}